\documentclass[a4paper,11pt]{fullverllncs}
\usepackage[left=2.5cm,top=2.5cm,right=2.5cm,centering]{geometry}

\usepackage{graphicx}

\usepackage{amsmath,amssymb}
\usepackage{graphicx}
\usepackage{ascmac}
\usepackage{array}
\usepackage{algpseudocode}
\usepackage{amsfonts}
\usepackage{amssymb}
\usepackage{amsmath}
\usepackage{algorithm, algpseudocode}
\usepackage{multirow}
\usepackage{arydshln}
\usepackage{hhline} 
\usepackage[misc,geometry]{ifsym} 

\usepackage[dvipdfmx, colorlinks]{hyperref, xcolor}
\definecolor{winered}{rgb}{0.5,0,0}
\definecolor{darkblue}{rgb}{0,0,0.5}
\definecolor{darkgreen}{rgb}{0,0.3,0}
\hypersetup{
linkcolor=winered,
citecolor=darkblue,
urlcolor=darkgreen
}

% This is samplepaper.tex, a sample chapter demonstrating the
% LLNCS macro package for Springer Computer Science proceedings;
% Version 2.20 of 2017/10/04
%

\urlstyle{rm} % optional

\newcommand{\A}{\mathsf{A}}
\newcommand{\B}{\mathsf{B}}
\newcommand{\C}{\mathsf{C}}

\newcommand{\Adv}{\mathsf{Adv}}

\newcommand{\Z}{\mathbb{Z}}
\newcommand{\G}{\mathbb{G}}
\newcommand{\N}{\mathbb{N}}

\newcommand{\Inst}{\mathtt{inst}}
\newcommand{\instsign}{\mathtt{sign}}
\newcommand{\instskip}{\mathtt{skip}}
\newcommand{\accept}{\mathtt{accept}}
\newcommand{\reject}{\mathtt{reject}}

% DS shorthand
\newcommand{\DS}{\mathsf{DS}}
\newcommand{\vk}{\mathsf{vk}}
\newcommand{\sk}{\mathsf{sk}}

\newcommand{\Setup}{\mathsf{Setup}}
\newcommand{\KeyGen}{\mathsf{KeyGen}}
\newcommand{\Sign}{\mathsf{Sign}}
\newcommand{\Verify}{\mathsf{Verify}}
\newcommand{\EUFCMA}{\mathsf{EUF \mathchar`- CMA}}
\newcommand{\OTEUFCMA}{\mathsf{OT \mathchar`-  EUF \mathchar`- CMA}}
\newcommand{\rmEUFCMA}{\mathrm{EUF}\mathchar`-\mathrm{CMA}}
\newcommand{\rmOTEUFCMA}{\mathrm{OT}\mathchar`-\mathrm{EUF}\mathchar`-\mathrm{CMA}}
\newcommand{\Cert}{\mathsf{Cert}}

\newcommand{\rmoMSDHii}{1 \mathrm{ \mathchar`-MSDH\mathchar`- 2}}
\newcommand{\rmqMSDHii}{\mathit{q} \mathrm{ \mathchar`-MSDH\mathchar`- 2}}

\newcommand{\CL}{\mathsf{CL}}

% MCL scheme shorthand
\newcommand{\MCL}{\mathsf{MCL}}
\newcommand{\MCLSetup}{\mathsf{MCL.Setup}}
\newcommand{\MCLKeyGen}{\mathsf{MCL.KeyGen}}
\newcommand{\MCLSign}{\mathsf{MCL.Sign}}
\newcommand{\MCLVerify}{\mathsf{MCL.Verify}}

% SyncAG scheme shorthand

\newcommand{\SyncAS}{\mathsf{SAS}}
\newcommand{\SyncASSetup}{\mathsf{SAS.Setup}}
\newcommand{\SyncASKeyGen}{\mathsf{SAS.KeyGen}}
\newcommand{\SyncASSign}{\mathsf{SAS.Sign}}
\newcommand{\SyncASVerify}{\mathsf{SAS.Verify}}
\newcommand{\SyncASAggregate}{\mathsf{SAS.Aggregate}}
\newcommand{\SyncASAggVerify}{\mathsf{SAS.AggVerify}}

% RiviseSyncAG scheme shorthand
\newcommand{\MCLSyncAS}{\mathsf{SAS_{LLY}}}
\newcommand{\MCLSyncASSetup}{\mathsf{SAS_{LLY}.Setup}}
\newcommand{\MCLSyncASKeyGen}{\mathsf{SAS_{LLY}.KeyGen}}
\newcommand{\MCLSyncASSign}{\mathsf{SAS_{LLY}.Sign}}
\newcommand{\MCLSyncASVerify}{\mathsf{SAS_{LLY}.Verify}}
\newcommand{\MCLSyncASAggregate}{\mathsf{SAS_{LLY}.Aggregate}}
\newcommand{\MCLSyncASAggVerify}{\mathsf{SAS_{LLY}.AggVerify}}

%% SyncAG scheme shorthand
%\newcommand{\MCLSyncAS}{\mathsf{LLMSAS}}
%\newcommand{\MCLSyncASSetup}{\mathsf{LLMSAS.Setup}}
%\newcommand{\MCLSyncASKeyGen}{\mathsf{LLMSAS.KeyGen}}
%\newcommand{\MCLSyncASSign}{\mathsf{LLMSAS.Sign}}
%\newcommand{\MCLSyncASVerify}{\mathsf{LLMSAS.Verify}}
%\newcommand{\MCLSyncASAggregate}{\mathsf{LLMSAS.Aggregate}}
%\newcommand{\MCLSyncASAggVerify}{\mathsf{LLMSAS.AggVerify}}

\spnewtheorem{assumption}{Assumption}{\bfseries}{\itshape}

% Used for displaying a sample figure. If possible, figure files should
% be included in EPS format.
%
% If you use the hyperref package, please uncomment the following line
% to display URLs in blue roman font according to Springer's eBook style:
% \renewcommand\UrlFont{\color{blue}\rmfamily}

\begin{document}

%% draft version
%\title{Improved Security Proof for the Camenisch-Lysyanskaya Signature-Based Synchronized Aggregate Signature Scheme} 
%%\author{ver.12 \number\year/\number\month/\number\day/\number\time}
%\institute{}
%\titlerunning{Improved Security Proof for CL-Based Synchronized Aggregate Signature}
%% draft version

%\titlerunning{Abbreviated paper title}
% If the paper title is too long for the running head, you can set
% an abbreviated paper title here

%% proceding version
\title{Improved Security Proof for the Camenisch-Lysyanskaya Signature-Based Synchronized Aggregate Signature Scheme\thanks{A preliminary version \cite{TT20} of this paper is appeared in Information Security and Privacy - 25th Australasian Conference (ACISP 2020).}}
\author{Masayuki Tezuka\textsuperscript{(\Letter)} \and Keisuke Tanaka}
\authorrunning{M.Tezuka et al.}
\institute{Tokyo Institute of Technology, Tokyo, Japan\\
\email{tezuka.m.ac@m.titech.ac.jp}}
%% proceding version

\maketitle              % typeset the header of the contribution
\pagestyle{plain}
\noindent
\makebox[\linewidth]{March 15, 2023}

\begin{abstract}

The Camenisch-Lysyanskaya signature scheme in CRYPTO 2004 is a useful building block to construct privacy-preserving schemes such as anonymous credentials, group signatures or ring signatures.
However, the security of this signature scheme relies on the interactive assumption called the LRSW assumption.
Even if the interactive assumptions are proven in the generic group model or bilinear group model, the concerns about these assumptions arise in a cryptographic community.
This fact caused a barrier to the use of cryptographic schemes whose security relies on these assumptions.

Recently, Pointcheval and Sanders proposed the modified Camenisch-Lysyanskaya signature scheme in CT-RSA 2018.
This scheme satisfies the EUF-CMA security under the new $q$-type assumption called the Modified-$q$-Strong Diffie-Hellman-2 ($\rmqMSDHii$) assumption.
However, the size of a $q$-type assumptions grows dynamically and this fact leads to inefficiency of schemes.

In this work, we revisit the Camenisch-Lysyanskaya signature-based synchronized aggregate signature scheme in FC 2013.
This scheme is one of the most efficient synchronized aggregate signature schemes with bilinear groups.
However, the security of this synchronized aggregate scheme was proven under the one-time LRSW assumption in the random oracle model.
We give the new security proof for this synchronized aggregate scheme under the $\rmoMSDHii$ (static) assumption in the random oracle model with little loss of efficiency.

\keywords{Synchronized aggregate signature \and Camenisch-Lysyanskaya signature \and Static assumption}
\end{abstract}

\section{Introduction}
\subsection{Background}\label{IntroBack}
\paragraph{\bf{Aggregate Signatures.}}
Aggregate signature schemes originally introduced by Boneh, Gentry, Lynn, and Shacham \cite{BGLS03} allow anyone to convert $n$ individual signatures $(\sigma_1, \dots, \sigma_n)$ produced by different $n$ signers on different messages into the aggregate signature $\Sigma$ whose size is much smaller than a concatenation of the individual signatures.

This feature leads significant reductions of bandwidth and storage space in BGP (Border Gateway Protocol) routing \cite{BGOY07,BGLS03,LOSSW06}, bundling software updates \cite{AGH10}, sensor network data \cite{AGH10}, authentication \cite{OBY19}, and blockchain protocol \cite{HW18,SMD14,Zhao19}.

After the introduction of aggregate signatures, various aggregate signatures have been proposed: sequential aggregate signatures \cite{LMRS04}, identity-based aggregate signatures \cite{GR06}, synchronized aggregate signatures \cite{AGH10,GR06}, and fault-tolerant aggregate signatures \cite{HKKKR16}.

\paragraph{\bf{Synchronized Aggregate Signatures.}}
Synchronized aggregate signatures are a special type of aggregate signatures.
The concept of the synchronized setting aggregate signature scheme was introduced by Gentry and Ramzan \cite{GR06}.

Ahn, Green, and Hohenberger~\cite{AGH10} revisited the Gentry-Ramzan model and formalized the synchronized aggregate signature scheme.
In this scheme, all of the signers have a synchronized time period $t$ and each signer can sign a message at most once for each period~$t$.
A set of signatures that are all generated for the same period $t$ can be aggregated into a short signature.

It is useful to adopt synchronized aggregate signature schemes to systems which have a natural reporting period, such as log or sensor data.
As mentioned in \cite{HW18}, synchronized aggregate signature schemes are also useful for blockchain protocols.
For instance, we consider a blockchain protocol that records several signed transactions in each new block creation.
The creation of an additional block is a natural synchronization event.
These signed transactions could use a synchronized aggregate signature scheme with a block number as a time period number.
This reduces the signature overhead from one per transaction to just one synchronized signature per block iteration.

\paragraph{\bf{Provable Secure Synchronized Aggregate Signature Schemes.}}
Several provable secure synchronized aggregate signature schemes with bilinear groups have been proposed (see Fig.~\ref{PBSinst}). 

Ahn, Green, and Hohenberger \cite{AGH10} constructed two synchronized aggregate signature schemes based on the Hohenberger-Waters \cite{HW09} short signature scheme.
One is constructed in the random oracle model and the other is constructed in the standard model. 
The security of both schemes relies on the computational Diffie-Hellman (CDH) assumption.

Lee, Lee, and Yung \cite{LLY13} proposed a synchronized aggregate signature scheme based on the Camenisch-Lysyanskaya signature $(\CL)$ scheme \cite{CL04}.
This is the most efficient synchronized aggregate signature scheme with bilinear groups in that the number of pairing operations in the verification of an aggregate signature and the number of group elements in an aggregate signature is smaller than those of \cite{AGH10,GR06}.
The security of this scheme relies on the one-time Lysyanskaya-Rivest-Sahai-Wolf (OT-LRSW) assumption \cite{LRSW99} in the random oracle model. 

As the provable secure synchronized aggregate signature schemes without bilinear groups, Hohenberger and Waters \cite{HW18} proposed the synchronized aggregate signature scheme based on the RSA assumption.

\begin{figure}[t]
\begin{center}
\begin{tabular}{lllcccc}\hline
Scheme &Assumption & Security & $pp$ & ~$\vk$~ & Agg & Agg Ver \\
 & & & size & size &size & (in Pairings) \\
\hline
\hline
GR \cite{GR06} &CDH + ROM &$\rmEUFCMA$ &~$O(1)~$ &~ID~ &3 &3\\
AGH \cite{AGH10} \S 4 &CDH &$\rmEUFCMA$ in CK &$O(k)$ &1 &3 &$k+3$\\
AGH \cite{AGH10} \S A &CDH + ROM &$\rmEUFCMA$ in CK &$O(1)$ &1 &3 &4\\
%Our work &$\rmqMSDHii$ + ROM~ &$\rmEUFCMA$ in CK &$O(1)$ &1 &2 &4\\
LLY \cite{LLY13}&OT-LRSW + ROM &$\rmEUFCMA$ in CK &$O(1)$ &1 &2 &3\\
&(interactive assumption) & & & & &\\
LLY \cite{LLY13} &$\rmoMSDHii$ + ROM &$\rmEUFCMA$ in CK &$O(1)$ &1 &2 &3\\
(New proof) & (static assumption)&&&&&\\
\hline
\end{tabular}\\
\begin{flushleft}
In our work, we prove that the scheme LLY \cite{LLY13} satisfies the $\rmEUFCMA$ security in the certified-key model under the $\rmoMSDHii$ assumption in the random oracle model.
\end{flushleft}
\caption{\small Summary of synchronized aggregate signature schemes with bilinear groups.
In the column of $``$Assumption$"$, $``$ROM$"$ means the random oracle model.
In the column of $``$Security$"$, $``$CK$"$ means the certified-key model.
$``$pp size$"$, $``\vk$ size$"$, $``$Agg size$"$, $``$Agg Ver$"$ mean the number of group elements in a public parameter pp, a verification key $\vk$, an aggregate signature, and the number of pairing operations in aggregate signatures verification respectively.
The scheme GR \cite{GR06} is an identity-based scheme that has a verification key size of $``$ID$"$.
In the scheme AGH~\cite{AGH10}, $k$ is a special security parameter.
As mentioned in \cite{AGH10}, $k$ could be five in practice.\protect\linebreak
}
\label{PBSinst}
\end{center}
\end{figure}

\paragraph{\bf{Camenisch-Lysyanskaya Signature Scheme.}}
Camenisch and Lysyanskaya \cite{CL04} proposed the $\CL$ scheme which has a useful feature called randomizability. 
This property allows anyone to randomize a valid signature $\sigma$ to $\sigma'$ where $\sigma$ and $\sigma'$ are valid signatures on the same message.
The $\CL$ scheme is widely used to construct various schemes: anonymous credentials \cite{CL04}, anonymous attestation \cite{BFGSW13}, divisible E-cash \cite{CPST15}, batch verification \cite{CHP07}, group signatures \cite{BCNSW10}, ring signatures \cite{BKM06}, and aggregate signatures \cite{Sch11}.

However, the security of the $\CL$ scheme relies on the Lysyanskaya-Rivest-Sahai-Wolf (LRSW) assumption which is an interactive assumption.
An interactive assumption allows us to design an efficient scheme, however, these are not preferable.

\paragraph{\bf{Modified Camenisch-Lysyanskaya Signature Scheme.}}
Pointcheval and Sanders \cite{PS18} proposed the Modified $q$-Strong Diffie-Hellman-2 $(\rmqMSDHii)$ assumption which is defined on a type 1 bilinear group.
This assumption is a $q$-type assumption \cite{BB04} where the number of input elements depends on the number of adversarial queries. 
They proved that the $\rmqMSDHii$ assumption holds in the generic bilinear group model \cite{BBG05} and the $\CL$ scheme satisfies the weak-existentially unforgeable under chosen message attacks (weak-$\rmEUFCMA$) security under the $\rmqMSDHii$ assumption.

Moreover, they proposed the modified Camenisch-Lysyanskaya signature $(\MCL)$ scheme which has randomizability.
Then, they showed that the $\MCL$ scheme satisfies the existentially unforgeable under chosen message attacks ($\rmEUFCMA$) security under the $\rmqMSDHii$ assumption.
Their modification from the $\CL$ scheme to the $\MCL$ scheme incurs a slight increase in the complexity.\footnote{Their modification from the $\CL$ scheme to the $\MCL$ scheme increases the number of group elements in a signature and an aggregate signature from $2$ to $3$.}

\subsection{Our Results}
To our knowledge, the most efficient synchronized aggregate signature scheme with bilinear groups is Lee et al.'s \cite{LLY13} scheme.
However, the security of this scheme relies on the interactive assumption (the OT-LRSW assumption).
Even if interactive assumptions hold in the generic group model or bilinear group model, the concerns about these assumptions arise in a cryptographic community.
This fact causes a barrier to the use of this scheme.

Also, it is not desired that the security of the scheme depends on $q$-type assumptions.
Because the size of these assumptions grows dynamically and this fact leads to inefficiency of the scheme.
Hence, it is desirable to prove the security of this scheme under the non-$q$-type (static) assumptions or construct another efficient synchronized aggregate signature scheme whose security does not rely on interactive assumptions or $q$-type assumptions.

\paragraph{\bf{Security Proof under the Static Assumption.}}
In this paper, we give a new security proof for Lee et al.'s synchronized aggregate scheme under the static assumption in the random oracle model.
More specifically, we convert from the $\MCL$ scheme to Lee et al.'s \cite{LLY13} synchronized aggregate signature scheme.
Then, we reduce the security of Lee et al.'s scheme to the one-time $\rmEUFCMA$ $(\rmOTEUFCMA)$ security of the $\MCL$ scheme in the random oracle model.
We refer the reader to Section \ref{LLMSection} for details about these techniques.

Since the $\rmOTEUFCMA$ security of the $\MCL$ scheme is implied by the $\rmoMSDHii$ assumption, the security of Lee et al.'s scheme can be proven under the $\rmoMSDHii$ assumption.
%We fix the parameter $q$ of $\rmqMSDHii$ to $q = 1$ and use the $\rmoMSDHii$ assumption in a static way.
We can regard the $\rmoMSDHii$ assumption as the static assumption.
Therefore, we can see that the security of Lee et al.'s scheme relies on the static assumption.
Notably, while the $\rmEUFCMA$ security of the $\MCL$ scheme is proved under the $q$-type assumption, the security of Lee et al.'s synchronized aggregate signature scheme can be proven under the static assumption in the random oracle model.

\paragraph{\bf{Trade-offs with Little Loss of Efficiency in the Reduction.}}
In general, there is a trade-off that efficiency is reduced when we design a scheme based on weaker computational assumptions.
Surprisingly, we can change the assumptions underlying the security of Lee et al.'s \cite{LLY13} scheme from the interactive assumption (OT-LRSW) to the static assumption ($\rmoMSDHii$) with little loss in the efficiency of the scheme.
Specifically, the size of verification key $\vk$, the size of aggregate signature $\Sigma$, and the number of pairing operations in an aggregate signature verification do not increase at all.

\subsection{Related Works}
Boneh et. al 's \cite{BGLS03} proposed the first full aggregate signature scheme which allows any user to aggregate signatures of different signers.
Furthermore, this scheme allows us to aggregate individual signatures as well as already aggregated signatures in any order. 
They constructed a full aggregate signature scheme in the random oracle model.
Hohenberger, Sahai, and Waters \cite{HSW13} firstly constructed a full aggregate signature scheme in the standard model by using multilinear maps.
Hohenberger, Koppula, and Waters \cite{HKW15} constructed a full aggregate signature scheme in the standard model by using the indistinguishability obfuscation.

Several variants of aggregate signature schemes have been proposed.
One major variant is a sequential aggregate signature scheme which was firstly proposed by Lysyanskaya, Micali, Reyzin, and Shacham \cite{LMRS04}.
In this scheme, an aggregate signature is constructed sequentially, with each signer modifying the aggregate signature in turn.
They constructed a sequential aggregate signature scheme in the random oracle model by using families of trapdoor permutations.
Lu, Rafail Ostrovsky, Sahai, Shacham, and Waters \cite{LOSSW06} firstly constructed the sequential aggregate signature scheme in the standard model based on the Waters signature scheme.
Another major variant of aggregate signature schemes is a synchronized aggregate signature scheme explained in Section \ref{IntroBack}.
Furthermore, Lee et. al \cite{LLY13} proposed a combined aggregate signature scheme.
In this scheme, a signer can use two modes of aggregation (sequential aggregation or synchronized aggregation) dynamically.
They constructed a combined aggregate signature scheme in the random oracle model based on the $\CL$ scheme.

\subsection{Road Map}
In Section \ref{Prelimi}, we recall bilinear groups, the $\rmoMSDHii$ assumption, and a digital signature scheme.
In Section \ref{SyncAggSig}, we review the definition of a synchronized aggregate signature scheme and its security notion.
In Section \ref{LLMSection}, we review the $\MCL$ scheme.
Next, we explain the relationship between the $\MCL$ scheme and Lee et al.'s aggregate signature scheme.
In particular, we explain how to convert from the $\MCL$ to Lee et al.'s aggregate signature scheme.
Then, we describe Lee et al.'s aggregate signature scheme construction and newly give a security proof under the $\rmoMSDHii$ assumption in the random oracle model.

\section{Preliminaries}\label{Prelimi}

%\subsection{Notations}
Let $1^{\lambda}$ be the security parameter. 
A function $f(\lambda)$ is negligible in $\lambda$ if $f(\lambda)$ tends to $0$ faster than $\frac{1}{\lambda^c}$ for every constant $c > 0$.
PPT stands for probabilistic polynomial time.
For an integer $n$, $[n]$ denotes the set $\{1,\dots, n\}$.
For a finite set $S$, $s \xleftarrow{\$} S$ denotes choosing an element $s$ from $S$ uniformly at random. 
For a group $\G$, we define $\G^* := \G \backslash \{1_{\G}\}$. 
For an algorithm $\A$, 
$y \leftarrow \A(x)$ denotes that the algorithm $\A$ outputs $y$ on input $x$.

\subsection{Bilinear Groups}
In this work, we use type $1$ pairings and introduce a bilinear group generator.
Let $\mathsf{G}$ be a bilinear group generator that takes as an input a security parameter $1^{\lambda}$ and outputs the descriptions of multiplicative groups $\mathcal{G} = (p, \G, \G_T, e)$ where $\G$ and $\G_T$ are groups of prime order $p$ and $e$ is an efficient computable, non-degenerating bilinear map $e:\G \times \G \rightarrow \G_T$.
\begin{enumerate}
\item Bilinear: for all $u \in \G$, $v \in \G$ and $a, b \in \Z_{p}$, then $e(u^a, v^b) = e(u, v)^{ab}$.
\item Non-degenerate: for any $g \in \G^*$ and $\tilde{g} \in \G^*$, $e(g, \tilde{g}) \neq 1_{\G_T}$.
\end{enumerate}

\subsection{Computational Assumption}
Pointcheval and Sanders \cite{PS18} introduced the new $q$-type assumption which is called the Modified $q$-Strong Diffie-Hellman-2 $(\rmqMSDHii)$ assumption.
This is a variant of the $q$-Strong Diffie-Hellman ($q$-SDH) assumption and defined on a type 1 bilinear group.
The $\rmqMSDHii$ assumption holds in the generic bilinear group model \cite{BBG05}.
In this work, we fix the value to $q = 1$ and only use $\rmoMSDHii$ assumption in a static way.
We can regard $\rmoMSDHii$ as a static assumption.

\begin{assumption}[Modified 1-Strong Diffie-Hellman-2 Assumption \cite{PS18}]
Let $\mathsf{G}$ be a type-1 pairing-group generator.
The Modified $1$-Strong Diffie-Hellman-2 $(\rmoMSDHii)$ assumption over $\mathsf{G}$ is that for all $\lambda \in \mathbb{N}$, for all $\mathcal{G}= (p, \G, \G_T, e) \leftarrow \mathsf{G}(1^\lambda)$, given $(\mathcal{G}, g, g^{x},g^{x^{2}}, g^b, g^{bx}, g^{bx^{2}}, g^a, g^{abx})$ where $g \leftarrow \G^*$ and $a,b,x \xleftarrow{\$} \mathbb{Z}_p^*$ as an input, no PPT adversary can, without non-negligible probability, output a tuple $(w, P, h^{\frac{1}{x+w}}, h^{\frac{a}{x \cdot P(x)}})$ with $h \in \G$, $P$ a polynomial in $\mathbb{Z}_p[X]$ of degree at most $1$, and $w \in \mathbb{Z}_p^*$ such that $X+w$ and $P(X)$ are relatively prime.\footnote{In the $\rmqMSDHii$ assumption, an input is changed to $(\mathcal{G}, g, g^{x},\dots, g^{x^{q+1}},\allowbreak g^b, g^{bx},\allowbreak \dots, g^{bx^{q+1}}, g^a, g^{abx})$ and the condition of the order of $P(x)$ is changed to at most $q$.}
\end{assumption}

\subsection{Digital Signature Schemes}
We review the definition of a digital signature scheme and its security notion.
\begin{definition}[Digital Signature Scheme]
A digital signature scheme $\DS$ consists of following four algorithms $(\Setup, \KeyGen, \allowbreak \Sign, \Verify)$.
\begin{itemize}
\item $\Setup (1^{\lambda}):$ Given a security parameter $\lambda$, return the public parameter pp.
We assume that $pp$ defines the message space $\mathcal{M}_{pp}$.

\item $\KeyGen (pp):$ Given a public parameter $pp$, return a verification key $\vk$ and a signing key $\sk$.

\item $\Sign (pp, \sk, m):$ Given a public parameter $pp$, a signing key $\sk$, and a message $m \in \mathcal{M}_{pp}$, return a signature $\sigma$.

\item $\Verify (pp, \vk, m, \sigma):$ Given a public parameter $pp$, a verification key $\vk$, a message $m \in \mathcal{M}_{pp}$, and a signature $\sigma$, return either $1$ (Accept) or $0$ (Reject).
\end{itemize}
Correctness: 
Correctness is satisfied if for all $\lambda \in \N$, $pp \leftarrow \Setup (1^{\lambda})$ for all $m \in \mathcal{M}_{pp}$, $(\vk, \sk) \leftarrow \KeyGen(pp)$, and $\sigma \leftarrow \Sign(pp, \sk, m)$, $\Verify(pp, \vk, m, \sigma) = 1$ holds.
\end{definition}

The $\rmEUFCMA$ security \cite{GMR88} is the standard security notion for digital signature schemes.

\begin{definition}[EUF-CMA Security \cite{GMR88}]%[Existentially unforgeable under chosen message attacks \cite{GMR88}]
The $\rmEUFCMA$ security of a digital signature scheme $\DS$ is defined by the following unforgeability game between a challenger $\C$ and a PPT adversary $\A$.

\begin{itemize}
\item $\C$ runs $pp \leftarrow \Setup(1^{\lambda})$, $(\vk, \sk) \leftarrow \KeyGen(pp)$, sets $Q \leftarrow \{\}$, and gives $(pp, \vk)$ to $\A$.
\item $\A$ is given access (throughout the entire game) to a sign oracle $\mathcal{O}^{\Sign}(\cdot)$.
Given an input $m$, $\mathcal{O}^{\Sign}$ sets $Q \leftarrow Q \cup \{m\}$ and returns $\sigma \leftarrow \Sign(pp, \sk, m)$.
\item $\A$ outputs a forgery $(m^*, \sigma^*)$.
\end{itemize}
A digital signature scheme $\DS$ satisfies the $\rmEUFCMA$ security if for all PPT adversaries $\A$, the following advantage
\begin{equation*}
\Adv^{\EUFCMA}_{\DS, \A}:=\Pr[\Verify(pp, \vk, m^*, \sigma^*) = 1 \land m^* \notin Q]
\end{equation*}
is negligible in $\lambda$.
\end{definition}
If the number of signing oracle $\mathcal{O}^{\Sign}$ query is restricted to the one-time in the unforgeability security game, we call $\DS$ satisfies the one-time $\rmEUFCMA$ ($\rmOTEUFCMA$) security.

\section{Synchronized Aggregate Signature Schemes}\label{SyncAggSig}
In this section, we review the definition of a synchronized aggregate signature scheme and its security notion.

\subsection{Synchronized Aggregate Signature Schemes}
Synchronized aggregate signature schemes \cite{AGH10,GR06} are a special type of aggregate signature schemes.
In this scheme, all of the signers have a synchronized time period $t$ and each signer can sign a message at most once for each period~$t$.
A set of signatures that are all generated for the same period $t$ can be aggregated into a short signature.
The size of an aggregate signature is the same size as an individual signature.
Now, we review the definition of synchronized aggregate signature schemes.

\begin{definition} [Synchronized Aggregate Signature Schemes \cite{AGH10,GR06}]\label{SyncAggDef}
A synchronized aggregate signature scheme $\SyncAS$ for a bounded number of periods is a tuple of algorithms $(\SyncASSetup, \allowbreak \SyncASKeyGen,\allowbreak \SyncASSign, \SyncASVerify,\allowbreak \SyncASAggregate, \allowbreak \SyncASAggVerify)$.

\begin{itemize}
\item $\SyncASSetup(1^\lambda, 1^T): $ Given a security parameter $\lambda$ and the time period bound $T$, return the public parameter pp.
We assume that $pp$ defines the message space $\mathcal{M}_{pp}$.

\item $\SyncASKeyGen (pp):$ Given a public parameter $pp$, return a verification key $\vk$ and a signing key $\sk$.

\item $\SyncASSign (pp, \sk, t, m):$ Given a public parameter $pp$, a signing key $\sk$, a time period $t \leq T$, and a message $m \in \mathcal{M}_{pp}$, return the signature $\sigma$.

\item $\SyncASVerify (pp, \vk, m, \sigma):$ Given a public parameter $pp$, a verification key $\vk$, a message $m \in \mathcal{M}_{pp}$, and a signature $\sigma$, return either $1$ (Accept) or $0$ (Reject).

\item $\SyncASAggregate (pp, (\vk_1,\dots, \vk_r), (m_1, \dots, m_r), \allowbreak (\sigma_1, \dots, \sigma_r)):$
Given a public parameter $pp$, a list of verification keys $(\vk_1,\dots, \vk_r)$, a list of messages $(m_1, \dots, m_r)$, and a list of signatures $(\sigma_1, \dots, \sigma_r)$, return either the aggregate signature $\Sigma$ or~$\bot$.  
\item $\SyncASAggVerify (pp, (\vk_1,\dots, \vk_r), (m_1, \dots, m_r), \Sigma):$
Given a public parameter $pp$, a list of verification keys $(\vk_1,\dots, \vk_r)$, a list of messages $(m_1, \dots, m_r)$, and an aggregate signature, return either $1$ (Accept) or $0$ (Reject).
\end{itemize}
Correctness:
Correctness is satisfied if for all $\lambda \in \mathbb{N}$, $T \in \mathbb{N}$, $pp \leftarrow \SyncASSetup(1^\lambda, 1^{T})$, for any finite sequence of key pairs $(\vk_1, \sk_1),\dots (\vk_r, \sk_r) \leftarrow \SyncASKeyGen (pp)$ where $\vk_i$ are all distinct, for any time period $t \leq T$, for any sequence of messages $(m_1, \dots m_r) \in \mathcal{M}_{pp}$, 
$\sigma_i \leftarrow \SyncASSign (pp, \sk_i, t, m_i)$ for $i \in [r]$, 
$\Sigma \leftarrow \SyncASAggregate(pp, (\vk_1,\dots, \vk_r), (m_1, \dots, m_r), (\sigma_1, \dots, \sigma_r))$, we have
\begin{equation*}
\begin{split}
\SyncASVerify &(pp, \vk_i, m_i, \sigma_i) = 1 {\rm \ for \ all \ } i \in [r]\\
&\land \SyncASAggVerify (pp, (\vk_1,\dots, \vk_r), (m_1, \dots, m_r), \Sigma) = 1.
\end{split}
\end{equation*}
\end{definition}

In a signature aggregation, it is desirable to confirm that each signature is valid.
This is because if there is at least one invalid signature, the generated aggregate signature will be invalid.\footnote{Fault-tolerant aggregate signature schemes \cite{HKKKR16} allow us to determine the subset of all messages belonging to an aggregate signature that were signed correctly.
However, this scheme has a drawback that the aggregate signature size depends on the number of signatures to be aggregated into it.}
In this work, before aggregating signatures, $\SyncASAggregate$ checks the validity of each signature.

\subsection{Security of Synchronized Aggregate Signature Schemes}
We introduce the security notion of synchronized aggregate signature schemes.
The $\rmEUFCMA$ security of synchronized aggregate signature schemes proposed by Gentry and Ramzan \cite{GR06} captures that it is hard for adversaries to forge an aggregate signature without signing key $\sk^*$.
However, they only provided heuristic security arguments in their synchronized aggregate signature scheme.

Ahn, Green, and Hohrnberger \cite{AGH10} introduced the certified-key model for the $\rmEUFCMA$ security of synchronized aggregate signature schemes.
In this model, signers must certify their verification key $\vk$ by proving knowledge of their signing key $\sk$.
In other words, no verification key $\vk$ is allowed except those correctly generated by the $\SyncASKeyGen$ algorithm.
In certified-key model, to ensure the correct generation of a verification key $\vk_i \neq \vk^*$, $\rmEUFCMA$ adversaries must submit $(\vk_i, \sk_i)$ to the certification oracle $\mathcal{O}^{\Cert}$.
As in \cite{AGH10,LLY13}, we consider the $\rmEUFCMA$ security in the certified-key model.
\begin{definition}[EUF-CMA Security in the Certified-Key Model \cite{AGH10,LLY13}]
The $\rmEUFCMA$ security of a sequential aggregate signature scheme $\SyncAS$ in the certified-key model is defined by the following unforgeability game between a challenger $\C$ and a PPT adversary $\A$.

\begin{itemize}
\item $\C$ runs $pp^* \leftarrow \SyncASSetup(1^\lambda, 1^{T})$, $(\vk^*, \sk^*) \leftarrow \SyncASKeyGen(pp^*)$, sets $Q \leftarrow \{\}$, $L \leftarrow \{\}$, $t_{ctr} \leftarrow 1$, and gives $(pp, \vk^*)$ to $\A$.

\item $\A$ is given access (throughout the entire game) to a certification oracle $\mathcal{O}^{\Cert}(\cdot, \cdot)$.
Given an input $(\vk, \sk)$, $\mathcal{O}^{\Cert}$ performs the following procedure.
\begin{itemize}
\item If the key pair $(\vk, \sk)$ is valid, $L \leftarrow L \cup \{\vk\}$ and return $``\accept"$.
\item Otherwise return $``\reject"$.
\end{itemize}
$($$\A$ must submit key pair $(\vk, \sk)$ to $\mathcal{O}^{\Cert}$ and get $``\accept"$ before using $\vk$.$)$

\item $\A$ is given access (throughout the entire game) to a sign oracle $\mathcal{O}^{\Sign}(\cdot, \cdot)$.
Given an input $(``\Inst", m)$, $\mathcal{O}^{\Sign}$ performs the following procedure.\\
$($$``\Inst" \in \{``\instskip", ``\instsign"\}$ represent the instruction for $\mathcal{O}^{\Sign}$ where $``\instskip"$ implies that $\A$ skips the concurrent period $t_{ctr}$ and $``\instsign"$ implies that $\A$ require the signature on message $m$. $)$
\begin{itemize}
\item If $t_{ctr} \notin [T]$, return $\bot$.
\item If $``\Inst" = ``\instskip"$, $t_{ctr} \leftarrow t_{ctr} +1$.
\item If $``\Inst" = ``\instsign"$, $Q \leftarrow Q \cup \{m\}$, $\sigma \leftarrow\SyncASSign (pp^*, \sk^*, t, m)$, $t_{ctr} \leftarrow t_{ctr} +1$, return~$\sigma$.
\end{itemize}
\item $\A$ outputs a forgery $((\vk^*_1,\dots, \vk_{r^*}^*), (m_1^*, \dots, m_{r^*}^*), \Sigma^*)$.
\end{itemize}
A sequential aggregate signature scheme $\SyncAS$ satisfies the $\rmEUFCMA$ security in the certified-key model if for all PPT adversaries $\A$, the following advantage
\begin{equation*}
\Adv^{\EUFCMA}_{\SyncAS, \A}:=\Pr\left[
\begin{split}
&\SyncASAggVerify(pp^*, (\vk^*_1,\dots, \vk_{r^*}^*), (m_1^*, \dots, m^*_{r^*}), \Sigma^*) = 1\\
&\land {\rm For \ all \ } j \in [r^*] {\rm \ such \ that \ } \vk^*_j \neq \vk^*, \vk^*_j \in L \\
&\land {\rm For \ some \ } j^* \in [r^*] {\rm \ such \ that \ } \vk^*_{j^*} = \vk^*, m^*_{j^*} \notin Q \\
\end{split}
\right]
\end{equation*}
is negligible in $\lambda$.

\end{definition}

\section{Lee et al.'s Aggregate Signature Scheme}\label{LLMSection}
In this section, first, we review the $\MCL$ scheme proposed by Pointcheval and Sanders \cite{PS18}.
Next, we explain an intuition that there is a relationship between the $\MCL$ scheme and Lee et al.'s aggregate signature scheme.
Concretely, we explain that there is a conversion from the $\MCL$ scheme to Lee et al.'s aggregate signature scheme. 
Then, we describe Lee et al.'s aggregate signature scheme construction.
Finally, we give a new security proof for Lee et al.'s scheme under the $\rmoMSDHii$ assumption in the random oracle model.

\subsection{Modified Camenisch-Lysyanskaya Signature Scheme}\label{MCLSS}

Pointcheval and Sanders \cite{PS18} proposed the modified Camenisch-Lysyanskaya signature scheme which supports a multi-message (vector message) signing.
In this work, we only need a single-message signing scheme.
Here, we review the single-message modified Camenisch-Lysyanskaya signature scheme $\MCL = (\MCLSetup, \allowbreak \MCLKeyGen, \MCLSign, \MCLVerify)$ as follows.
\begin{itemize}
\item $\MCLSetup(1^\lambda):$\\
~~~$\mathcal{G}=(p, \G, \G_T, e) \leftarrow \mathsf{G}(1^\lambda)$.\\
~~~Return $pp \leftarrow \mathcal{G}$.

\item $\MCLKeyGen(pp):$\\
~~~$g \xleftarrow{\$} \mathbb{G}^*$, $x \xleftarrow{\$} \mathbb{Z}_p^*$, $y\xleftarrow{\$} \mathbb{Z}_p^*$, $z \xleftarrow{\$} \mathbb{Z}_p^*$, $X \leftarrow g^{x}$, $Y \leftarrow g^{y}$, $Z \leftarrow g^{z}$.\\
~~~Return $(\vk, \sk) \leftarrow ((g, X, Y, Z), (x, y, z))$.

\item$\MCLSign (pp, \sk, m):$\\
~~~Parse $\sk$ as $(x, y, z)$\\
~~~$w \xleftarrow{\$} \mathbb{Z}_p$, $A \xleftarrow{\$} \mathbb{G}^*$, $B \leftarrow A^y$, $C \leftarrow A^z$, $D \leftarrow C^y$, $E \leftarrow A^xB^{mx}D^{wx}$.\\
~~~Return $\sigma \leftarrow (w, A, B, C, D, E)$.

\item $\MCLVerify (pp, \vk, m, \sigma):$\\
~~~Parse $\vk$ as $(g, X, Y, Z)$, $\sigma$ as $(w, A, B, C, D, E)$.\\
~~~If $\left(e(A, Y) \neq e(B, g)\right) \lor \left(e(A, Z) \neq e(C, g)\right) \lor \left(e(C, Y) \neq e(D, g)\right)$, return $0$. \\
~~~If $e(AB^mD^w, \tilde{X}) = e (E,g)$, return $1$.\\
~~~Otherwise return $0$.\\
\end{itemize}

Pointcheval and Sanders \cite{PS18} proved that if the $\rmqMSDHii$ assumption holds, then the $\MCL$ scheme satisfies the $\rmEUFCMA$ security where $q$ is a bound on the number of adaptive signing queries.
In this work, we only need the $\rmOTEUFCMA$ security for the $\MCL$ scheme.
%By setting $q=1$, we can obtain the following theorem.
\begin{theorem}[\cite{PS18}]\label{MCLtoqSDHii}
If the $\rmoMSDHii$ assumption holds, then the $\MCL$ scheme satisfies the $\rmOTEUFCMA$ security.
\end{theorem}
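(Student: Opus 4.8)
The plan is to prove the contrapositive: from any PPT adversary $\A$ that wins the $\rmOTEUFCMA$ game against $\MCL$ with non-negligible probability I would build a PPT algorithm $\B$ that breaks the $\rmoMSDHii$ assumption. Since the one-time game allows $\A$ at most a single call to $\mathcal{O}^{\Sign}$, this is exactly the $q=1$ specialization of the Pointcheval--Sanders reduction, so I would follow that template. The one-time restriction is what caps every polynomial appearing in the analysis at degree $1$, which is precisely why the static instance $(\mathcal{G}, g, g^{x}, g^{x^2}, g^{b}, g^{bx}, g^{bx^2}, g^{a}, g^{abx})$ --- carrying powers only up to the square --- is rich enough to drive the whole simulation, whereas the general EUF-CMA proof needs the full $q$-type instance.

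\textbf{Setup and embedding.} $\B$ receives the instance, sets $pp \leftarrow \mathcal{G}$, and defines a verification key $\vk = (g, X, Y, Z)$ by reading off instance elements, so that the scheme's secret exponents $(x,y,z)$ are identified with suitable products of the instance exponents $x,a,b$. The choice is dictated by the verification chain $B = A^{y}$, $C = A^{z}$, $D = A^{yz}$: $\B$ must be able to form $D$ for the one honestly answered signature, so $Y$ and $Z$ are picked so that the element playing the role of $g^{yz}$ is itself available from the instance. In the core exponent $1 + my + wyz$ the message $m$ enters linearly, and the per-signature randomizer $w$ is the scalar that will become the shift ``$+w$'' in the target tuple.

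\textbf{Answering the query and extraction.} For the single query on a message $m_1$, $\B$ fixes the base $A_1$ as a known power of $g$ and samples $w_1$; it then computes $B_1, C_1, D_1$ from the available powers and assembles $E_1 = A_1^{x(1 + m_1 y + w_1 yz)}$ from the instance elements. When $\A$ outputs a valid forgery $(m^*, \sigma^*)$ with $m^* \notin Q$, i.e.\ $m^* \neq m_1$ (if $\A$ makes no query the forgery is even more constrained and extraction is immediate), the $\MCLVerify$ equations pin down $E^*, B^*, C^*, D^*$ algebraically. Reading the exponents of $E_1$ and of $E^*$ as two linear relations in the hidden products and eliminating/dividing, $\B$ obtains a group element of the form $h^{1/(x+w)}$ together with $h^{a/(x\,P(x))}$, setting $w := w^*$ and letting $P$ be the degree-$\leq 1$ polynomial determined by $m^*$; the quotient structure (hence the two separate output elements) arises from dividing by the determinant built from $(m_1, w_1)$ and $(m^*, w^*)$.

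\textbf{The hard part.} Two steps carry the real weight. First, the simulation of the single signature must be correctly (perfectly or statistically) distributed \emph{and} computable using only the degree-restricted powers in the instance; choosing $A_1$ and $w_1$ so that the otherwise-uncomputable cross term in $E_1$ is eliminated is the delicate bookkeeping, and it is here that $q=1$ is essential. Second, I must check the side conditions of the assumption on the extracted pair: that $w^* \in \Z_p^*$ (which fails only with negligible probability and is otherwise aborted) and, crucially, that $X + w$ and $P(X)$ are relatively prime. This coprimality is exactly where the winning condition $m^* \neq m_1$ is used, since distinct messages force the two linear factors to be distinct and hence coprime, making the extracted tuple a legitimate $\rmoMSDHii$ solution. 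Assembling these bounds yields $\Adv^{\rmOTEUFCMA}_{\MCL, \A} \leq \Adv^{\rmoMSDHii}_{\mathsf{G}, \B} + \negl(\lambda)$, which establishes the theorem.
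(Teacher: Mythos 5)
Note first what the paper itself does with this statement: it offers no proof at all. Theorem~\ref{MCLtoqSDHii} is imported verbatim from Pointcheval--Sanders \cite{PS18} as the $q=1$ specialization of their theorem that the $\MCL$ scheme is $\rmEUFCMA$-secure under the $\rmqMSDHii$ assumption. So the only benchmark is the PS18 reduction, and your overall architecture --- contrapositive reduction, embedding the instance into $\vk$, simulating the single signature, and extracting a tuple $(w, P, h^{1/(x+w)}, h^{a/(x\cdot P(x))})$ whose side conditions are discharged by the freshness of the forgery --- is indeed the architecture of the cited proof, with the correct observation that the one-time restriction caps $\deg P$ at $1$ and matches the instance's powers up to $x^2$.

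Two concrete details in your sketch are, however, wrong, and they are load-bearing. First, you assign \emph{both} components of the solution to the forgery: ``setting $w := w^*$'' and ``$P$ determined by $m^*$.'' In the verification exponent $x(1+m^*y+w^*yz)$, the linear factor that emerges after the embedding has as its shift a \emph{derived} scalar mixing $m^*$, $w^*$ and the reduction's hidden scalars (a combination of the shape $(1+m^*y)/(w^*yz)$), not the literal signature component $w^*$; a quick sanity check is that if $w$ were literally $w^*$, the forged message would never enter the solution and the winning condition $m^*\notin Q$ could not be used anywhere. Dually, $P$ must be pinned by the \emph{query/simulation} side --- the value the reduction programmed when answering the one signing query --- since one of the two output elements has to be computable from the instance; with both $w$ and $P$ taken from the forgery, the answered query $(m_1,w_1)$ appears nowhere in the extracted tuple and neither output element can actually be produced. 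Your coprimality argument then needs the forgery data and the query data to occupy the two \emph{different} slots. Second, ``$\B$ fixes the base $A_1$ as a known power of $g$ and samples $w_1$'' cannot work. Since $a$ can only reach the forgery through $\vk$, the scheme's $X$ must be built on $g^a$, and then $E_1 = A_1^{x(1+m_1y+w_1yz)}$ is computable only if $A_1$ is assembled from the $b$-line $g^b, g^{bx}, g^{bx^2}$ --- this is the entire reason those elements are in the $\rmoMSDHii$ instance --- and only if $w_1$ is \emph{fixed} by the reduction (not freely sampled) so that the $g^{ab}$ cross term, deliberately absent from the instance, cancels and $E_1$ collapses onto the single available element $g^{abx}$; one then argues the forced $w_1$ is statistically close to uniform. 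You half-see this in your ``cross-term elimination'' remark, but it directly contradicts your stated choices of $A_1$ and $w_1$, so as written the simulation step fails.
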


\subsection{Conversion to Lee et al.'s Aggregate Signature Scheme}\label{HighLevelIdea}

We explain that the $\MCL$ scheme can be converted into Lee et al.'s aggregate signature scheme.
Our idea of conversion is a similar technique in \cite{LLY13} which converts the Camenisch-Lysyanskaya signature $\CL$ scheme to the synchronized aggregate signature scheme.

Now, we explain an intuition of our conversion.
We start from the $\MCL$ scheme in Section \ref{MCLSS}.
A signature of the $\MCL$ scheme on a message $m$ is formed as
\begin{equation*}
\sigma = (w, A, B=A^y, C=A^z, D=C^y, E=A^xB^{mx}D^{wx}).
\end{equation*}
where $w \xleftarrow{\$} \mathbb{Z}_p$ and $A \xleftarrow{\$} \mathbb{G}_1^*$.
If we can force signers to use same $w$, $A$, $B=A^y$, $C=A^z$, and $D=C^y$, we can obtain an aggregate signature 
\begin{equation*}
\Sigma = \left(w, A, B, C, D, E'= \prod^r_{i=1}E_i =A^{\sum^{r}_{i=1}x_i}B^{\sum^{r}_{i=1}m_ix_i}D^{\sum^{r}_{i=1}wx_i}\right)
\end{equation*}
on a message list $(m_1,\dots, m_r)$ from valid signatures $(\sigma_1,\dots\sigma_r )$ where $\sigma_i = (w_, A, B, \allowbreak C, D, E_i)$ is a signature on a message $m_i$ generated by each signer.
If we regard $E'$ as $E'=(AD^w)^{\sum^{r}_{i=1}x_i} B^{\sum^{r}_{i=1}m_ix_i}$, verification of the aggregate signature $\Sigma$ on the message list $(m_1,\dots, m_r)$ can be done by checking the following equation. 
\begin{equation*}
e(E', g) =e\left(AD^w , \prod^{r}_{i=1} \vk_i \right) \cdot e\left(B , \prod^{r}_{i=1} \vk_i^{m_i} \right)
\end{equation*}
Then, required elements to verify the aggregate signature $\Sigma$ are $F=AD^w$, $B$, and $E'$.
Similar to Lee et al.'s conversion, the three verification equations $e(A, Y) = e(B, g)$, $e(A, Z) = e(C, g)$, $e(C, Y) = e(D, g)$ in $\MCLVerify$ is discarded in this conversion.
This does not affect the security proof in Section \ref{LLMSyncProof}.
We use hash functions to force signers to use the same $F$ and $B$ for each period~$t$.
We choose hash functions $H_1$ and $H_2$ and set $F \leftarrow H_1(t)$ and $B \leftarrow H_2(t)$.
Then, we can derive Lee et al.'s aggregate signature scheme.
In this derived aggregate signature scheme, a signature on a message $m$ and period $t$ is formed as
\begin{equation*}
\sigma = (E=H_1(t)^{x}H_2(t)^{mx}, t).
\end{equation*}
%where $w \xleftarrow{\$} \mathbb{Z}_p$ and $A \xleftarrow{\$} \mathbb{G}_1^*$.
An aggregate signature $\Sigma'$ on a message list $(m_1,\dots, m_r)$ and period $t$ is formed as
\begin{equation*}
\Sigma = \left(E'= \prod^r_{i=1}E_i = H_1(t)^{\sum^{r}_{i=1}x_i}H_2(t)^{\sum^{r}_{i=1}m_ix_i}, t\right)
\end{equation*}
where $\sigma_i = (E_i=H_1(t)^{x_i}H_2(t)^{m_ix_i}, t)$ is a signature on a message $m_i$ generated by each signer.
% whose aggregate signature $\Sigma'$ consists of two elements where $\Sigma' = (E', t)$.
In our conversion, we need to hash a message with a time period for the security proof.
This conversion is used for the reduction algorithm $\B$ in Section \ref{LLMSyncProof}.

\subsection{Lee et al.'s Synchronized Aggregate Signature Scheme}
We describe Lee et al.'s synchronized aggregate signature scheme obtained by adapting the conversion in Section \ref{HighLevelIdea} to the $\MCL$ scheme.
Let $T$ be a bounded number of periods which is a polynomial in $\lambda$.
The Lee et al.'s synchronized aggregate signature scheme $\MCLSyncAS = (\MCLSyncASSetup, \allowbreak \MCLSyncASKeyGen, \allowbreak \MCLSyncASSign, \allowbreak \MCLSyncASVerify, \allowbreak \MCLSyncASAggregate, \MCLSyncASAggVerify)$ \cite{LLY13} is given as follows.\footnote{The $\MCLSyncAS$ scheme described here is slightly different from the original ones \cite{LLY13} in that the range of $H_2$ is changed from $\G$ to $\G^*$.}

\begin{itemize}
\item $\MCLSyncASSetup(1^\lambda, 1^{T}):$
\begin{enumerate}
\item $\mathcal{G}=(p, \G, \G_T, e) \leftarrow \mathsf{G}(1^\lambda)$, $g \xleftarrow{\$} \G^*$.
\item Choose hash functions: \\
$H_1: [T] \rightarrow \G$, $H_2: [T] \rightarrow \G^*$, $H_3: [T] \times \{0, 1\}^* \rightarrow \mathbb{Z}_p$. 
\item Return $pp \leftarrow (\mathcal{G}, g, H_1, H_2, H_3)$.
\end{enumerate}

\item $\MCLSyncASKeyGen (pp):$
\begin{enumerate}
\item $x \xleftarrow{\$} \mathbb{Z}^*_p$, $X \leftarrow g^{x}$.
\item Return $(\vk, \sk) \leftarrow (X, x)$.
\end{enumerate}

\item $\MCLSyncASSign (pp, \sk, t, m):$
\begin{enumerate}
\item $m' \leftarrow H_3(t,m)$, $E \leftarrow H_1(t)^{\sk}H_2(t)^{m'\sk}$.
\item Return $(E, t)$.
\end{enumerate}

\item $\MCLSyncASVerify (pp, \vk, m, \sigma):$
\begin{enumerate}
\item $m' \leftarrow H_3(t,m)$, parse $\sigma$ as $(E, t),$.
\item If $e(E,g) = e(H_1(t)H_2(t)^{m'}, \vk)$, return $1$.
\item Otherwise return $0$.
\end{enumerate}

\item $\MCLSyncASAggregate (pp, (\vk_1,\dots, \vk_r), (m_1, \dots, m_r), \allowbreak (\sigma_1, \dots, \sigma_r)):$
\begin{enumerate}
\item For $i=1$ to $r$, parse $\sigma_i$ as $(E_i, t_i)$.
\item If there exists $i \in \{2,\dots, r\}$ such that $t_i \neq t_1$, return $\bot.$
\item If there exists $(i, j) \in [r]\times [r]$ such that $i \neq j \land \vk_i = \vk_j$, return $\bot$.
\item If there exists $i \in [r]$ suth that $\MCLSyncASVerify (pp, \vk_i, m_i, \sigma_i) \neq 0$,\\ ~~ return~$\bot$.
\item $E' \leftarrow \prod^{r}_{i=1} E_i$.
\item Return $\Sigma \leftarrow (E', w)$.
\end{enumerate}

\item $\MCLSyncASAggVerify (pp, (\vk_1,\dots, \vk_r), (m_1, \dots, m_r), \Sigma):$
\begin{enumerate}
\item There exists $(i, j) \in [r]\times [r]$ such that $i \neq j \land \vk_i = \vk_j$, return $0$.
\item For $i= 1$ to $r$, $m'_i \leftarrow H_3(t,m_i)$.
\item Parse $\Sigma$ as $(E', w)$.
\item If $e(E',g) =e\left(H_1(t) , \prod^{r}_{i=1} \vk_i \right) \cdot e\left(H_2(t) , \prod^{r}_{i=1} \vk_i^{m'_i} \right)$, return $1$.
\item Otherwise, return $0$.
\end{enumerate}

\end{itemize}

Now, we confirm the correctness.
Let $(\vk_i, \sk_i) \leftarrow \MCLSyncASKeyGen (pp)$ and $\sigma_i \leftarrow \MCLSyncASSign (pp, \allowbreak \sk_i, t, m_i)$ for $i \in [r]$ where $\vk_i$ are all distinct.
Then, for all $i \in [r]$, $E_i \leftarrow H_1(t)^{\sk_i}H_2(t)^{m'_i\sk_i}$ holds where $m'_i \leftarrow H_3(t, m_i)$ and $\sigma_i = (E_i,t)$.
This fact implies that $\MCLSyncASVerify (pp, \vk_i, m_i, \sigma_i)=1$. Furthermore, let $\Sigma \leftarrow \MCLSyncASAggregate (pp, (\vk_1,\dots, \vk_r), (m_1, \dots, m_r), \allowbreak (\sigma_1, \dots, \sigma_r))$.
Then, 
\begin{equation*}
E' = \prod^{r}_{i=1}E_i = H_1(t)^{\sum^n_{i=1} \sk_i} H_2(t)^{\sum^n_{i=1} m'_i\sk_i}
\end{equation*}
holds where $\Sigma = (E', t)$ and $m'_i \leftarrow H_3(t, m_i)$ for all $i \in [r]$.
This fact implies that $\MCLSyncASAggVerify (pp,\allowbreak (\vk_1,\dots, \vk_r), (m_1, \dots, m_r), \Sigma)=1$.

\subsection{New Security Proof under the Static Assumption}\label{LLMSyncProof}
%\subsection{New Security Proof for Lee et al.'s Synchronized Aggregate Signature Scheme under Static Assumption}\label{LLMSyncProof}
We reassess the $\rmEUFCMA$ security of the $\MCLSyncAS$ scheme.
In particular, we newly prove the $\rmEUFCMA$ security of the $\MCLSyncAS$ scheme under the $\rmoMSDHii$ assumption.

\begin{theorem}\label{MCLSyncASEUFCMA}
If the $\MCL$ scheme satisfies the $\rmOTEUFCMA$ security, then, in the random oracle model, the $\MCLSyncAS$ scheme satisfies the $\rmEUFCMA$ security in the certified-key model.
\end{theorem}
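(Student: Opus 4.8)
The plan is to construct a reduction $\B$ that, given an adversary $\A$ breaking the $\rmEUFCMA$ security of $\MCLSyncAS$ in the certified-key model, breaks the $\rmOTEUFCMA$ security of the underlying $\MCL$ scheme. By Theorem~\ref{MCLtoqSDHii}, this immediately yields security under the $\rmoMSDHii$ assumption. The central difficulty is that an $\MCL$ signature contains the full tuple $(w, A, B, C, D, E)$, whereas an $\MCLSyncAS$ signature has collapsed to $(E, t)$ with the roles of $A, B, C, D$ absorbed into the hash outputs $H_1(t)$ and $H_2(t)$; the reduction must program the random oracles so that the single $\MCL$ signing query $\B$ is permitted corresponds exactly to the one period at which $\A$ demands a signature on the challenge key $\vk^*$.

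\textbf{Setup and oracle programming.} First I would have $\B$ receive the $\MCL$ public parameter and challenge key $\vk^* = (g, X, Y, Z)$, set the $\MCLSyncAS$ verification key to $X$, and forward $(pp, X)$ to $\A$. The key step is guessing the \emph{target period}: $\B$ samples $t^* \xleftarrow{\$} [T]$ as its guess for the period in which $\A$ will eventually ask for a signature under $\vk^*$ (the period $t$ of the ultimate forgery's challenge index $j^*$). For the guessed period $t^*$, $\B$ programs $H_1(t^*) \leftarrow A$ and $H_2(t^*) \leftarrow B$ using the $A, B$ components it obtains from its single allowed $\MCL$ signing query. For every other period $t \neq t^*$, $\B$ programs the oracles with \emph{known} exponents, i.e.\ it picks $\alpha_t, \beta_t \xleftarrow{\$} \Z_p$ and sets $H_1(t) \leftarrow g^{\alpha_t}$, $H_2(t) \leftarrow g^{\beta_t}$, so that it can itself produce signatures under $\vk^*$ for those periods (a signature is $E = H_1(t)^x H_2(t)^{m'x} = X^{\alpha_t + m'\beta_t}$, computable from $X$ alone). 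The oracle $H_3$ is answered with fresh random values in $\Z_p$, with a table maintained for consistency.

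\textbf{Simulating the sign and certification oracles.} The certification oracle is trivial to simulate: $\B$ checks each submitted $(\vk_i, \sk_i)$ for validity and records accepted $\vk_i$ in $L$ exactly as the game prescribes, learning $\sk_i$ in the process (this is precisely what the certified-key model buys us, and is what lets $\B$ strip the non-target contributions out of the forgery later). For the sign oracle, when $\A$ issues an $\instsign$ query at the current counter value $t_{ctr}$: if $t_{ctr} \neq t^*$, $\B$ answers using the known exponent $\alpha_{t_{ctr}}, \beta_{t_{ctr}}$; if $t_{ctr} = t^*$, $\B$ sets $m' \leftarrow H_3(t^*, m)$, makes its single $\MCL$ signing query on the message $m'$, receives $(w, A, B, C, D, E)$, uses $A, B$ to have programmed $H_1(t^*), H_2(t^*)$ consistently, and returns $(E, t^*)$. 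Because $\A$ signs at most once per period, the single-query budget of the $\MCL$ game is never exceeded at $t^*$.

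\textbf{Extracting the forgery.} When $\A$ outputs a valid aggregate forgery $((\vk_1^*, \dots, \vk_{r^*}^*), (m_1^*, \dots, m_{r^*}^*), \Sigma^*)$ with $\Sigma^* = (E'^*, t)$, the reduction aborts unless $t = t^*$ (this is where the $1/T$ guessing loss enters, and since $T$ is polynomial the loss is acceptable). Conditioned on a correct guess, there is an index $j^*$ with $\vk_{j^*}^* = X$ and $m_{j^*}^* \notin Q$. Using the aggregate verification equation together with the signing keys $\sk_i$ harvested from $\mathcal{O}^{\Cert}$ for all $i \neq j^*$, $\B$ divides out every non-target factor: it computes the quotient $\tilde{E} \leftarrow E'^* \big/ \prod_{i \neq j^*} H_1(t^*)^{\sk_i} H_2(t^*)^{m_i'^{*}\sk_i}$, which isolates the single-signer contribution $\tilde{E} = H_1(t^*)^{x} H_2(t^*)^{m_{j^*}'^{*} x} = A^{x} B^{m_{j^*}'^{*} x}$. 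I would then repackage this into a full $\MCL$ forgery $(w, A, B, C, D, \tilde{E})$ on the message $m_{j^*}'^{*} = H_3(t^*, m_{j^*}^*)$, reusing the $C, D, w$ components that $\B$ holds from its $t^*$ query (or from having chosen $A$ itself). Since $m_{j^*}^* \notin Q$ implies $m_{j^*}'^{*}$ was never the queried $\MCL$ message (barring an $H_3$ collision, which contributes a negligible additive term), this is a valid one-time $\MCL$ forgery.

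\textbf{Main obstacle.} The delicate point is verifying that the repackaged tuple genuinely passes $\MCLVerify$, not merely the collapsed aggregate check. The three discarded pairing equations $e(A,Y) = e(B,g)$, $e(A,Z) = e(C,g)$, $e(C,Y) = e(D,g)$ must hold for the forgery tuple; I would ensure this by having $\B$ reuse the \emph{exact} $A, B, C, D$ returned by its $\MCL$ signing query at $t^*$ (these satisfy those relations by construction), so that only the last equation $e(\tilde{E}, g) = e(A B^{m'} D^{w}, X)$ needs fresh argument — and that equation follows directly from the aggregate verification relation after the division step. Care is also needed to argue that $m_{j^*}^* \notin Q$ on the synchronized side translates to message-freshness on the $\MCL$ side despite the $H_3$ layer; handling the possibility that two distinct pairs $(t^*, m)$ hash to the same $m'$ under $H_3$ requires a standard collision bound, which I expect to be the only genuinely quantitative part of the argument.
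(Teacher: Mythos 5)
Your overall strategy is the same as the paper's (reduce to the one-time security of $\MCL$, guess the target period, program the random oracles, embed the single $\MCL$ signature at that period, strip out the certified co-signers' contributions, repackage the remainder as an $\MCL$ forgery), but two steps of your reduction fail as written. First, the oracle programming $H_1(t^*) \leftarrow A$ is wrong: an $\MCL$ signature satisfies $E = A^x B^{m'x} D^{wx} = (AD^w)^x B^{m'x}$, so under your programming the signature $(E, t^*)$ you hand to $\A$ at the target period fails the $\MCLSyncAS$ verification equation $e(E,g) = e(H_1(t^*)H_2(t^*)^{m'}, \vk^*)$ by the factor $e(D^{w}, \vk^*) \neq 1_{\G_T}$ (except with probability $1/p$), so $\A$ can detect the simulation and abort; likewise your repackaged forgery $(w, A, B, C, D, \tilde{E})$ with $\tilde{E} = A^{x}B^{m'x}$ and the \emph{reused} $w$ fails $\MCLVerify$'s equation $e(AB^{m'}D^{w}, \vk^*) = e(\tilde{E}, g)$. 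The paper programs $H_1(t') \leftarrow A_{\MCL}D_{\MCL}^{w_{\MCL}}$, i.e.\ the composite element $F = AD^{w}$ from the conversion in Section~\ref{HighLevelIdea}, precisely so that both checks go through (and it separately argues that $A^{1+yzw}$ is correctly distributed).

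Second, and more fundamentally, your plan to defer the single $\MCL$ signing query until $\A$ asks for a signature at period $t^*$, and only then to ``have programmed $H_1(t^*), H_2(t^*)$ consistently,'' is retroactive programming and is not available in the random oracle model: $\A$ may query $H_1(t^*)$, $H_2(t^*)$, and $H_3(t^*, m)$ long before any sign query at $t^*$ (it must query them anyway to verify its own forgery), and it may even skip period $t^*$ entirely and forge there, in which case your $\B$ never receives $(w,A,B,C,D)$ at all --- your fallback of ``having chosen $A$ itself'' does not work, since producing $B = A^y$, $C = A^z$, $D = C^y$ requires the unknown $y,z$. Once you are forced to program $H_1(t^*), H_2(t^*)$ at hash-query time, the $\MCL$ query must be made up front on a message fixed in advance, and the sign query at $t^*$ is then answerable only if $H_3(t^*, m)$ happens to equal that pre-fixed message; since $\A$ may already have fixed $H_3(t^*, m)$ by an earlier hash query, the reduction must guess \emph{which} $H_3$ query corresponds to the eventual sign query. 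This is exactly what the paper does --- query on a random $m_{\MCL}$ at setup, guess $k' \xleftarrow{\$} [q_{H_3}]$, and program $H_3(t', m_{k'}) = m_{\MCL}$ --- and it is why the paper's advantage bound carries the factor $1/(T \cdot q_{H_3})$ rather than the $1/T$ your analysis claims. (Your closing $H_3$-collision remark corresponds to the paper's $(1 - 1/p)$ abort term and is fine.)
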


\begin{proof}
We give an overview of our security proof.
Similar to the work in \cite{LLY13}, we reduce the $\rmEUFCMA$ security of the $\MCLSyncAS$ scheme to the $\rmOTEUFCMA$ security of the $\MCL$ scheme. 
We construct a reduction algorithm according to the following strategy.
First, the reduction algorithm chooses a message $m_\MCL$ at random, make signing query on $m_\MCL$, and obtains its signature $\sigma_{\MCL}=(w_{\MCL}, A_{\MCL}, B_{\MCL}, C_{\MCL}, \allowbreak D_{\MCL}, \allowbreak E_{\MCL})$ of the $\MCL$ scheme.
Then, the reduction algorithm guesses the time period $t'$ of a forged aggregate signature and an index $k' \in [q_{H_3}]$ at random where $q_{H_3}$ be the maximum number of $H_3$ hash queries.
Then reduction algorithm programs hash values as $H_1(t') = A_{\MCL}D_{\MCL}^{w_{\MCL}}$, $H_2(t')=B_{\MCL}$, and $H_3(t',m_{k'}) = m_{\MCL}$.
For a signing query on period $t \neq t'$, the reduction algorithm generate the signature by programmability of hash functions $H_1$, $H_2$, and $H_3$.
For a signing query on period $t \neq t'$, if the query index $j$ of $H_3$ is equal to the index $k'$, the reduction algorithm can compute a valid signature by using $\sigma_{\MCL}$ (This can be done by using the conversion technique in Section \ref{HighLevelIdea}.).
Otherwise, the algorithm should abort the simulation.
Finally, the reduction algorithm extracts valid forgery of the $\MCL$ scheme from a forged aggregate signature on time period $t'$ of the $\MCLSyncAS$ scheme.

Now, we give the security proof. 
Let $\A$ be an $\rmEUFCMA$ adversary of the $\MCLSyncAS$ scheme, $\C$ be the $\rmOTEUFCMA$ game challenger of the $\MCL$ scheme, and $q_{H_3}$ be the maximum number of $H_3$ hash queries.
We construct the algorithm $\B$ against the $\rmOTEUFCMA$ game of the $\MCL$ scheme.
The construction of $\B$ is given as follow.
\begin{itemize}
\item {\bf Initial setup:}
Given an input $pp=\mathcal{G}_{\MCL}$ and $\vk = (g_{\MCL}, X_{\MCL}, Y_{\MCL}, Z_{\MCL})$ from $\C$, 
$\B$ performs the following procedure.
\begin{itemize}
\item $\mathcal{G} \leftarrow \mathcal{G}_{\MCL}$, $g \leftarrow g_{\MCL}$, $pp^* \leftarrow (\mathcal{G}, g)$, $\vk^* \leftarrow X_{\MCL}$. $t' \xleftarrow{\$} [T]$, $k' \xleftarrow{\$} [q_{H_3}]$, $t_{ctr} \leftarrow 1$, $L \leftarrow \{\}$, $K \leftarrow \{\}$, $\mathbb{T}_1 \leftarrow \{\}$, $\mathbb{T}_2 \leftarrow \{\}$, $\mathbb{T}_3 \leftarrow \{\}$, $Q \leftarrow \{\}$. 
\item $m_{\MCL} \xleftarrow{\$} \mathbb{Z}_p$, query $\C$ for the signature on the message $m_{\MCL}$ and get its signature $\sigma_{\MCL} = (w_{\MCL}, A_{\MCL}, B_{\MCL}, C_{\MCL}, D_{\MCL}, \allowbreak E_{\MCL})$, 
\item Send $(pp^*, \vk^*)$ to $\A$ as an input.
\end{itemize}

\item $\mathcal{O}^{\Cert}(\vk, \sk):$
If $\vk = g^{\sk}$, update lists $L \leftarrow L \cup \{\vk\}$, $K \leftarrow K \cup \{(\vk, \sk)\}$ and return $``\accept"$ to $\A$.
Otherwise return $``\reject"$ to $\A$.

\item $\mathcal{O}^{H_1}(t_i):$
Given an input $t_i$, $\B$ responds as follows.
\begin{itemize}
\item If there is an entry $(t_i, \cdot , F_i)$ (`$\cdot$' represents an arbitrary value or $\bot$) for some $F_i \in \G_1$ in $\mathbb{T}_1$, return $F_i$.
\item If $t_i \neq t'$, $r_{(1,i)} \xleftarrow{\$} \mathbb{Z}_p$, $F_i \leftarrow g^{r_{(1,i)}}$, $\mathbb{T}_1 \leftarrow \mathbb{T}_1 \cup \{(t_i, r_{(1,i)} ,F_i)\}$, return $F_i$.
\item If $t_i = t'$, $\mathbb{T}_1 \leftarrow \mathbb{T}_1 \cup \{(t_i, \bot , A_{\MCL}D_{\MCL}^{w_{\MCL}}\}$, return $A_{\MCL}D_{\MCL}^{w_{\MCL}}$.
\end{itemize}

\item $\mathcal{O}^{H_2}(t_i):$
Given an input $t_i$, $\B$ responds as follows.
\begin{itemize}
\item If there is an entry $(t_i, \cdot , B_i)$ (`$\cdot$' represents an arbitrary value or $\bot$) for some $B_i \in \G_1^*$ in $\mathbb{T}_3$, return $B_i$.
\item If $t_i \neq t'$, $r_{(2,i)} \xleftarrow{\$} \mathbb{Z}^*_p$, $B_i \leftarrow g^{r_{(2,i)}}$, $\mathbb{T}_2 \leftarrow \mathbb{T}_2 \cup \{(t_i, r_{(2,i)} ,B_i)\}$, return~$D_i$.
\item If $t_i = t'$, $\mathbb{T}_2 \leftarrow \mathbb{T}_2 \cup \{(t_i, \bot , B_{\MCL})\}$, return $B_{\MCL}$.
\end{itemize}

\item $\mathcal{O}^{H_3}(t_i, m_j):$
Given an input $(t_i, m_j)$, $\B$ responds as follows.
\begin{itemize}
\item If there is an entry $(t_i, m_j, m'_{(i,j)})$ for some $m'_{(i,j)} \in \mathbb{Z}_p$ in $\mathbb{T}_3$, return~$m'_{(i,j)}$.
\item If $t_i\neq t' \lor j \neq k'$, $m'_{(i,j)} \xleftarrow{\$} \mathbb{Z}_p$, $\mathbb{T}_3 \leftarrow \mathbb{T}_3 \cup \{(t_i, m_j, m'_{(i,j)})\}$, return~$m'_{(i,j)}$.
\item If $t_i = t' \land j = k'$, $\mathbb{T}_3 \leftarrow \mathbb{T}_3 \cup \{(t_i, m_j, m_{\MCL})\}$, return $m_{\MCL}$.
\end{itemize}

\item $\mathcal{O}^{\Sign}(``\Inst", m_j):$
Given an input $(``\Inst", m_j)$, $\B$ performs the following procedure.
\begin{itemize}
\item If $t_{ctr} \notin [T]$, return $\bot$.
\item If $``\Inst" = ``\instskip"$, $t_{ctr} \leftarrow t_{ctr} +1$.
\item If $``\Inst" = ``\instsign"$,
\begin{itemize}
\item If $t_{ctr} \neq t'$, $E \leftarrow X_{\MCL}^{r_{(1,ctr)}} X_{\MCL}^{r_{(2,ctr)}m'_{(ctr,j)}} $ where $r_{(1,i)}$, $r_{(2,i)}$, and $m'_{(i,j)}$ are retreived from $(t_{ctr}, r_{(1,ctr)}, F_{ctr}) \in \mathbb{T}_1$, $(t_{ctr}, r_{(2,ctr)} ,B_{ctr}) \in \mathbb{T}_2$, and $(t_{ctr}, m_j, \allowbreak m'_{(ctr,j)}) \in \mathbb{T}_3$ respectively.
$Q \leftarrow Q \cup \{m_j\}$, return $\sigma_{ctr,j} \leftarrow (E, t_{ctr})$, then update $t_{ctr} \leftarrow t_{ctr} +1$.
\item If $t_{ctr} = t' \land j = k'$, $Q \leftarrow Q \cup \{m_j\}$, return $\sigma_{ctr,j} \leftarrow (E_{\MCL}, t_i)$, then update $t_{ctr} \leftarrow t_{ctr} +1$
\item If $t_{ctr} = t' \land j \neq k'$, abort the simulation.
\end{itemize}
\end{itemize}

\item {\bf Output procedure:} 
$\B$ receives a forgery $((\vk^*_1,\dots, \vk_{r^*}^*), (m_1^*, \dots,\allowbreak m_{r^*}^*), \Sigma^*)$ outputted by $\A$.
Then $\B$ proceeds as follows.

\begin{enumerate}
\item If $\MCLSyncASAggVerify(pp^*, (\vk^*_1,\dots, \vk_{r^*}^*), (m_1^*, \dots, m^*_{r^*}), \allowbreak \Sigma^*) \neq 1$, then abort.
\item If there exists $j \in [r^*]$ such that $\vk^*_j \neq \vk^* \land \vk^*_j \notin L$, then abort.
\item If there is no $j^* \in [r^*]$ such that $\vk^*_{j^*} = \vk^* \land m^*_{j^*} \notin Q$, then abort.
\item Set $j^* \in [r^*]$ such that $\vk^*_{j^*} = \vk^* \land m^*_{j^*} \notin Q$.
\item Parse $\Sigma^*$ as $(E^*{}', t^*)$.
\item \label{tnoabort} If $t^* \neq t'$, then abort.
\item $m^*_{j^*}{}' \leftarrow H_3(t^*,m^*_{j^*})$
\item \label{mnoabort} If $m^*_{j^*}{}' = m_{\MCL}$, then abort.% where .
\item For $i \in [r^*]\backslash \{j^*\}$, retrieve $\sk^*_i = x_i$ of $\vk^*_i$ from $K$.
\item $F' \leftarrow H_1(t^*)$, $B' \leftarrow H_2(t^*)$,
$m'_{i} \leftarrow H_3(t^*,m^*_{i})$ for $i \in [r^*]\backslash \{j^*\}$, \\
$E' \leftarrow E^*{}' \cdot \left(F'{}^{\sum_{i \in [r^*]\backslash \{j^*\}} x_i} B'{}^{\sum_{i \in [r^*]\backslash \{j^*\}} x_i m'_{i}} \right)^{-1}$.
\item Return $(m^*_{\MCL}, \sigma^*_{\MCL}) \leftarrow (m^*_{j^*}, (w_{\MCL},A_{\MCL}, B', C_{\MCL}, D_{\MCL}, E'))$.
\end{enumerate}
\end{itemize}

We confirm that if $\B$ does not abort, $\B$ can simulate the $\rmEUFCMA$ game of the $\MCLSyncAS$ scheme.

\begin{itemize}
\item {\bf Initial setup:}
First, we discuss the distribtuon of $pp^*$.
In the original $\rmEUFCMA$ game of the $\MCLSyncAS$ scheme, $pp^* = (\mathcal{G}, g)$ is constructed by $\mathcal{G}=(p, \G, \G_{T}, e) \leftarrow \mathsf{G}(1^\lambda)$ and $g \xleftarrow{\$} \G^*$.
In the simulation of $\B$, $pp^*$ is a tuple $(\mathcal{G}_{\MCL}, g_{\MCL})$.
This tuple is constructed by $\C$ as $\mathcal{G}_{\MCL}=(p, \G, \G_{T}, e) \leftarrow \mathsf{G}(1^\lambda)$ and $g_{\MCL} \xleftarrow{\$} \G^*$.
Therefore, $\B$ simulates $pp^*$ perfectly.
Next, we discuss the distribution of $\vk^*$.
In the original $\rmEUFCMA$ game of the $\MCLSyncAS$ scheme, $\vk$ is computed by $x \xleftarrow{\$} \mathbb{Z}^*_p$ and $\vk^* \leftarrow g^x$.
In the simulation of $\B$, $\vk^*$ is set by $X_{\MCL}$.
Since $X_{\MCL}$ is computed by $\C$ as $x_{\MCL} \xleftarrow{\$} {Z}_p$ and $X_{\MCL} \leftarrow g^{x_{\MCL}}$, distributions of $\vk$ between the original game and simulation of $\B$ are identical.
Hence, the distributions of $(pp^*, \vk^*)$ are identical.

\item {\bf Output of $\mathcal{O^{\Cert}}$:} 
This is clearly that $\B$ can simulate the original $\rmEUFCMA$ game of the $\MCLSyncAS$ scheme perfectly.
\item {\bf Output of $\mathcal{O}^{H_1}$:} 
In the original game, hash values of $H_1$ is chosen from $\G$ uniformly at random.
In the simulation of $\B$, if $t_i \neq t'$, the hash value $H(t_i)$ is set by $g^{r_{(1, i)}}$ where $r_{(1, i)} \xleftarrow{\$} \mathbb{Z}_p$.
Obviously, in this case, $\B$ can simulate $\mathcal{O}^{H_1}$ perfectly.
If $t_i = t'$, the hash value $H(t_i)$ is set by $F=A_{\MCL}D_{\MCL}^{w_{\MCL}}= A_{\MCL}^{1+y_{\MCL}z_{\MCL}w_{\MCL}}$ where $Y_{\MCL}= g_{\MCL}^{y_{\MCL}}$, $Z_{\MCL}=g_{\MCL}^{z_{\MCL}}$, and $w_{\MCL}$ is chosen by $\C$ as $w_{\MCL} \leftarrow \mathbb{Z}_p$.
For fixed $y_{\MCL} \in \mathbb{Z}^*_p$ and $z_{\MCL} \in \mathbb{Z}_p^*$, the distribution $\alpha$ where $\alpha \xleftarrow{\$} \mathbb{Z}_p$ and $w_{\MCL} \xleftarrow{\$} \mathbb{Z}_p$, $\alpha \leftarrow 1+y_{\MCL}z_{\MCL}w_{\MCL}$ are identical.
This fact implies that $\B$ also simulate $\mathcal{O}^{H_1}$ perfectly in the case of $t_i = t'$.
Therefore, $\B$ simulates $\mathcal{O}^{H_1}$ perfectly.

\item {\bf Output of $\mathcal{O}^{H_2}$:} 
As the same argument of $\mathcal{O}^{H_1}$, if $t_i \neq t'$, $\B$ can simulate hash values $H(t_i)$ perfectly.
In the case of $t_i = t'$, the hash value $H(t_i)$ is set by $B_{\MCL} = A^{y_{\MCL}} = g^{x_{\MCL}y_{\MCL}}$.
For fixed $x_{\MCL} \in \mathbb{Z}^*_p$, the distributions of $B$ where $y_{\MCL} \xleftarrow{\$} \mathbb{Z}^*_p$, $B \leftarrow g^{x_{\MCL}y_{\MCL}}$ and $B \xleftarrow{\$} \G^*$ are identical.
Therefore, $\B$ simulates $\mathcal{O}^{H_2}$ perfectly.

%\item {\bf Output of $\mathcal{O}^{H_3}$:} 
%As the same argument of $\mathcal{O}^{H_3}$, it is easy to confirm that $\B$ simulates $\mathcal{O}^{H_3}$ perfectly.

\item {\bf Output of $\mathcal{O}^{H_3}$:} 
If $t_i \neq t' \lor j \neq k'$, clearly $\B$ can simulate $\mathcal{O}^{H_3}$ perfectly.
If $t_i = t' \land j = k'$, the hash value $H_3(t_i, m_j)$ is set by $m_{\MCL}$.
Since $m_{\MCL}$ is chosen by $\B$ as $m_{\MCL} \xleftarrow{\$} \mathbb{Z}_p$,
$\B$ simulates $\mathcal{O}^{H_3}$ perfectly.

\item {\bf Output of $\mathcal{O^{\Sign}}$:} For the sake of argument, we denote $X_{\MCL}= g_{\MCL}^{x_{\MCL}}$ $(x_{\MCL} \in \mathbb{Z}_p^*)$.
If $t_i\neq t'$, $\B$ sets $E \leftarrow X_{\MCL}^{r_{(1,i)}} X_{\MCL}^{r_{(2,i)}m'_{(i,j)}}$ and output the signature $\sigma = (E, t_i)$.
Now we confirm that $\sigma$ is a valid signature on the message $m_j$.
The following equation
\begin{equation*}
\begin{split}
E = X_{\MCL}^{r_{(1,i)}} X_{\MCL}^{r_{(2,i)}m'_{(i,j)}} &= (g^{x_{\MCL}}_{\MCL})^{r_{(1,i)}} (g^{x_{\MCL}}_{\MCL})^{r_{(2,i)}m'_{(i,j)}}\\
&= H_1(t_i)^{x_{\MCL}} H_2(t_i)^{x_{\MCL}m'_{(i,j)}}
\end{split}
\end{equation*}
holds where $m'_{(i,j)} = H_3(t_i, m_j)$.
This fact implies that
\begin{equation*}
e(E,g) = e(H_1(t_i)H_2(t_i)^{m'_{(i,j)}}, \vk^*)
\end{equation*}
holds.
Therefore, $\sigma$ is valid signature on the message $m_j$.

If $t_i\neq t' \land j = k'$, $\B$ sets $E \leftarrow E_{\MCL}$, return $\sigma_{i,j} \leftarrow (E, t_i)$ to $\A$.
We also confirm that $\sigma$ is a valid signature on the message $m_j$.
In the case, $H_1(t_i) = A_{\MCL}D_{\MCL}^{w_{\MCL}}$, $H_2(t_i)= B_{\MCL}$, and $H_3(t_i, m_j) = m'_{(i,j)} = m_{\MCL}$ hold.
Since $E_{\MCL}$ is the valid signature of the $\MCL$ scheme on message $m_{\MCL}$,
\begin{equation*}
\begin{split}
e (E_{\MCL},g) &= e(A_{\MCL}B_{\MCL}^{m_\MCL}D_{\MCL}^{w_{\MCL}}, X_{\MCL})\\
 &= e((A_{\MCL}D_{\MCL}^{w_{\MCL}})B_{\MCL}^{m_\MCL}, X_{\MCL})
\end{split}
\end{equation*}
holds. This implies that
$e(E,g) = e(H_1(t_i)H_2(t_i)^{m'_{(i,j)}}, \vk^*)$ 
%\begin{equation*}
%e(E,g) = e(H_1(t_i)H_2(t_i)^{m'_{(i,j)}}, \vk^*).
%\end{equation*}
where $m'_{(i,j)}= H_3(t_i, m_j)$.
\end{itemize}
By the above discussion, we can see that $\B$ does not abort, $\B$ can simulate the $\rmEUFCMA$ game of the $\MCLSyncAS$ scheme.

Second, we confirm that when $\A$ successfully output a valid forgery $((\vk^*_1,\dots, \allowbreak \vk_{r^*}^*), \allowbreak(m_1^*, \dots, m_{r^*}^*), \Sigma^*)$ of the $\MCLSyncAS$ scheme, $\B$ can forge a signature of the $\MCL$ scheme.
Let $((\vk^*_1,\dots, \vk_{r^*}^*), \allowbreak(m_1^*, \dots, m_{r^*}^*), \Sigma^*)$ be a valid forgery output by $\A$.
Then there exists $j^* \in [r^*]$ such that $\vk^*_{j^*} = \vk^*$.
By the verification equation of $\MCLSyncASVerify$, 
\begin{equation*}
e(E^*{}', g) = e\left(H_1(t^*) , \prod^{r^*}_{i=1} \vk^*_i \right) \cdot e\left(H_2(t^*) , \prod^{r^*}_{i=1} (\vk^*_i)^{m_i^*} \right)  
\end{equation*}
holds where $\Sigma^*= (E^*{}', t^*)$ and $H_3(t^*, m_i^*) = m_i^*{}'$ for $i \in [r^*]$.
If $\B$ does not abort in Step \ref{tnoabort} of {\bf Output procedure}, $t^* = t'$ holds.
This means that $H_1(t^*) = A_{\MCL}D_{\MCL}^{w_{\MCL}}$ and $H_2(t^*)= B_{\MCL}$ hold.
These facts imply that
\begin{equation*}
\begin{split}
E^*{}' &= H_1(t^*)^{\sum^{r^*}_{i=1}\sk^*_i}H_2(t^*)^{\sum^{r^*}_{i=1}m_i^*{}'\sk^*_i}\\
&= \left(A_{\MCL}D_{\MCL}^{w_{\MCL}} \right)^{\sum^{r^*}_{i=1}x^*_i}B_{\MCL}^{\sum^{r^*}_{i=1}m_i^*{}'x^*_i}
\end{split}
\end{equation*}
holds where $\sk^*_i = x^*_i$ is a secret key corresponding to $\vk^*_i$.

By setting $F' \leftarrow A_{\MCL}D_{\MCL}^{w_{\MCL}}$ and $B' \leftarrow B_{\MCL}$, 
\begin{equation*}
\begin{split}
E' &= E^*{}' \cdot \left(F'{}^{\sum_{i \in [r^*]\backslash \{j^*\}} x_i} B'{}^{\sum_{i \in [r^*]\backslash \{j^*\}} x_i m'_{i}} \right)^{-1}\\
& = (A_{\MCL}D_{\MCL}^{w_{\MCL}})^{x^*_{j^*}}B_{\MCL}^{m_{j^*}^*{}'x^*_{j^*}}\\
\end{split}
\end{equation*}

Moreover, $e(A_{\MCL}, Y_{\MCL}) = e(B_{\MCL}, g_{\MCL})$, $e(A_{\MCL}, Z_{\MCL}) = e(C_{\MCL}, g_{\MCL})$, and $e(C_{\MCL}, Y_{\MCL})  \allowbreak = e(D_{\MCL}, g_{\MCL})$ holds.
If $\B$ does not abort in Step \ref{mnoabort} of {\bf Output procedure}, $m^*_{j^*}$ is a not queried message for the signing of the $\rmOTEUFCMA$ game of the $\MCL$ scheme.
Therefore, if $\B$ does not abort and outputs $(m^*_{\MCL}, \sigma^*_{\MCL}) \allowbreak \leftarrow (m^*_{j^*}, (w_{\MCL},A_{\MCL}, B', C_{\MCL}, D_{\MCL}, E'))$, $\B$ can forge a signature of the $\MCL$ scheme.

Finally, we analyze the probability that $\B$ succeeds in forging a signature of the $\MCL$ scheme.
First, we consider the probability that $\B$ does not abort at the simulation of signatures.
$\B$ aborts the simulation of $\mathcal{O^{\Sign}}$ if $t_{ctr} = t' \land j \neq k'$.
The probability that $\B$ succeeds in simulating $\mathcal{O^{\Sign}}$ is at least $1/q_{H_3}$.
Next, we consider the probability that $\B$ aborts in Step \ref{tnoabort} of {\bf Output procedure}.
Since $\B$ chooses the target period $t' \leftarrow [T]$, the probability $t^* \neq t'$ is $1/[T]$.
Finally, the probability that $\B$ aborts in Step \ref{mnoabort} of {\bf Output procedure} is $1/p$.
Let $\Adv^{\EUFCMA}_{\MCLSyncAS, \A}$ be the advantage of the $\rmEUFCMA$ game for the $\MCLSyncAS$ scheme of $\A$.
The advantage of the $\rmOTEUFCMA$ game for the $\MCL$ scheme of $\B$ is 
\begin{equation*}
\Adv^{\OTEUFCMA}_{\MCL, \B} \geq \frac{\Adv^{\EUFCMA}_{\MCLSyncAS, \A}}{T \times q_{H_3}} \left(1- \frac{1}{p}\right). 
\end{equation*}
Therefore, we can conclude the proof of Theorem \ref{MCLSyncASEUFCMA}.
\qed
\end{proof}

By combining Theorem \ref{MCLtoqSDHii} and Theorem \ref{MCLSyncASEUFCMA}, we have the following corollary.
\begin{corollary}\label{MCLSynctoqSDHii}
If the $\rmoMSDHii$ assumption holds, then, in the random oracle model, the $\MCLSyncAS$ scheme satisfies the $\rmEUFCMA$ security in the certified-key model.
\end{corollary}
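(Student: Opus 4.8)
The plan is to prove the corollary by composing the two results already available in the excerpt. Since Theorem~\ref{MCLtoqSDHii} tells me that the $\rmoMSDHii$ assumption implies the $\rmOTEUFCMA$ security of the $\MCL$ scheme, it suffices to establish Theorem~\ref{MCLSyncASEUFCMA}, i.e., to turn any $\rmEUFCMA$ forger of the $\MCLSyncAS$ scheme (in the certified-key and random oracle model) into a one-time forger of $\MCL$; chaining the two reductions then yields an $\rmoMSDHii$ solver, and a negligible $\rmoMSDHii$-advantage forces a negligible aggregate-forgery advantage. So the substance is the reduction $\B$ from an aggregate forger $\A$ to the one-time $\MCL$ game with challenger $\C$.

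First I would have $\B$ receive $\MCL$ parameters $(g_\MCL, X_\MCL, Y_\MCL, Z_\MCL)$ from $\C$, repackage them as $pp^* = (\mathcal{G}, g)$ and $\vk^* = X_\MCL$, and---because only one signing query is permitted---immediately ask $\C$ for a signature $(w_\MCL, A_\MCL, B_\MCL, C_\MCL, D_\MCL, E_\MCL)$ on a uniformly random $m_\MCL$. Following the conversion in Section~\ref{HighLevelIdea}, I would embed this single signature at a guessed target period $t' \xleftarrow{\$} [T]$ and a guessed $H_3$-index $k' \xleftarrow{\$} [q_{H_3}]$ by programming $H_1(t') = A_\MCL D_\MCL^{w_\MCL}$, $H_2(t') = B_\MCL$, and $H_3(t', m_{k'}) = m_\MCL$. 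For all other inputs I would set $H_1$ and $H_2$ to $g$ raised to freshly sampled, recorded exponents, and $H_3$ to fresh random values. The certification oracle is simulated by the obvious check $\vk = g^{\sk}$, recording $(\vk,\sk)$.

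Signing is then answered as follows. For any period $t \neq t'$, $\B$ knows the exponents $r_1, r_2$ with $H_1(t) = g^{r_1}$, $H_2(t) = g^{r_2}$, so it outputs $E = X_\MCL^{r_1} X_\MCL^{r_2 m'} = H_1(t)^{x_\MCL} H_2(t)^{m' x_\MCL}$ without knowing $x_\MCL$; at $t'$ with query index $k'$ it returns the embedded $E_\MCL$, and at $t'$ with a different index it aborts. To extract, when $\A$ outputs a valid aggregate forgery at period $t^*$, $\B$ aborts unless $t^* = t'$, locates $j^*$ with $\vk^*_{j^*} = \vk^*$ and $m^*_{j^*} \notin Q$, and---here the certified-key model is essential---reads off the secret keys $x_i$ of all the other signers from its certification list and divides their contributions out of the aggregate element $E^*{}'$. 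This leaves exactly $(A_\MCL D_\MCL^{w_\MCL})^{x_{j^*}} B_\MCL^{m^*_{j^*}{}'x_{j^*}}$, which together with $A_\MCL, C_\MCL, D_\MCL$ reconstitutes a valid $\MCL$ signature on the fresh message $m^*_{j^*}$.

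The hard part will be verifying that the programmed oracle values at $t'$ have exactly the real-game distribution, so that, conditioned on no abort, the simulation is perfect. For $H_2(t') = B_\MCL = g^{x_\MCL y_\MCL}$ uniformity over $\G^*$ follows from $y_\MCL$ uniform over $\mathbb{Z}_p^*$ and $x_\MCL \neq 0$; for $H_1(t') = A_\MCL^{1 + y_\MCL z_\MCL w_\MCL}$ uniformity over $\G$ follows from $w_\MCL$ uniform over $\mathbb{Z}_p$ making the exponent uniform. I would also need to confirm that reusing $A_\MCL, C_\MCL, D_\MCL$ keeps the auxiliary $\MCL$ verification identities intact and that the stripped element indeed satisfies the main $\MCL$ pairing equation for $m^*_{j^*}$. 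Assembling the guesses then gives the loss: surviving signing costs a factor $1/q_{H_3}$, hitting the right period costs $1/T$, and the extracted message is fresh except with probability $1/p$, yielding $\Adv^{\OTEUFCMA}_{\MCL,\B} \geq \frac{\Adv^{\EUFCMA}_{\MCLSyncAS,\A}}{T\cdot q_{H_3}}\left(1-\frac{1}{p}\right)$, from which the corollary follows.
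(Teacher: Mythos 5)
Your proposal is correct and follows essentially the same route as the paper: the corollary is obtained by composing Theorem~\ref{MCLtoqSDHii} with the reduction proving Theorem~\ref{MCLSyncASEUFCMA}, and your reduction $\B$ (embedding the one-time $\MCL$ signature via programming $H_1(t') = A_{\MCL}D_{\MCL}^{w_{\MCL}}$, $H_2(t') = B_{\MCL}$, $H_3(t', m_{k'}) = m_{\MCL}$, simulating signing at $t \neq t'$ from known discrete logs of the hash values, stripping the certified signers' contributions from the aggregate, and the uniformity arguments for the programmed values) is the same as the paper's, down to the identical bound $\Adv^{\OTEUFCMA}_{\MCL, \B} \geq \frac{\Adv^{\EUFCMA}_{\MCLSyncAS, \A}}{T \cdot q_{H_3}}\left(1-\frac{1}{p}\right)$.
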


\section*{Acknowledgement}
A part of this work was supported by Input Output Hong Kong, Nomura Research Institute, NTT Secure Platform Laboratories, Mitsubishi Electric, I-System, JST CREST JPMJCR14D6, JST OPERA, and JSPS KAKENHI 16H01705, 17H01695.
We also would like to thank anonymous referees for their constructive comments.

\bibliographystyle{abbrvurl}
\bibliography{WMS}

\setcounter{tocdepth}{2}
\tableofcontents

\end{document}